\documentclass[12pt,a4paper]{article}

\usepackage{amsmath}
\usepackage{amsthm}
\usepackage{amsfonts}
\usepackage{amssymb}
\usepackage{graphicx}
\usepackage{pbox}%%%using this to break a cell in an array so I can get a line break
\usepackage{tikz, tikz-3dplot}
\usetikzlibrary{decorations.markings,arrows}
\usepackage[section]{placeins}%This prevents placing floats before a section.
\usepackage{pifont} % this is just to get nice bullet styles in itemize, \ding{226}
\usepackage[colorinlistoftodos]{todonotes}%%%this gives nice todonotes in preparation. lets see if it works
\usepackage{hyperref}

\hypersetup{
	colorlinks,
	linkcolor={red!50!black},
	citecolor={blue!50!black},
	urlcolor={blue!80!black}
}

\theoremstyle{plain}
\newtheorem{thm}{Theorem}%%going to just number plainly

\newtheorem{propn}{Proposition}
\newtheorem{lem}{Lemma}
\theoremstyle{definition}

\newtheorem{rmk}{Remark}
\newcommand{\R}{\mathbf{R}}

\title{The Maxwell Conjecture is False }
\author{Philip Arathoon\thanks{Babson College, MA, USA, \texttt{parathoon@babson.edu}}, Gavin Ball\thanks{University of Missouri, Columbia, MO, USA, \texttt{gavin.ball@missouri.edu}}, Matthew D. Kvalheim\thanks{University of Maryland, Baltimore County, MD, USA, \texttt{kvalheim@umbc.edu}}}
\date{July 2026}
\begin{document}
	\maketitle
	\begin{abstract}
		
		We exhibit a configuration of five point charges in Euclidean space whose electrostatic potential admits at least 24 critical points all of which are non-degenerate. Maxwell's conjecture that the field of \(n\) point charges has at most \((n-1)^2\) critical points which are all non-degenerate is therefore false.
	\end{abstract}
	
	% \begin{center}
		% \begin{minipage}{0.8\textwidth}
			% 	\fbox{\textbf{AI statement...}
				% 		\vspace{1em}
				
				% Hello	
				% }
			% \end{minipage}
		% \end{center}
	
	\section{Introduction}
	
	In J. C. Maxwell's 1873 treatise on electricity and magnetism he discusses the number of equilibria of the electric field generated by \(n\) point charges \cite[\S113]{maxwell1873treastise}. Apparently unaware of this, M. Morse and S. S. Cairns in 1969 posed the problem of finding an upper bound for the number of equilibria \cite[p.~293]{morse1969critical}. The first general bounds were supplied by A. Gabrielov, D. Novikov, and B. Shapiro in \cite{gabrielov2007mystery} who, based on their reading of \cite[\S113]{maxwell1873treastise}, formulated the `Maxwell conjecture' which states that if the critical points of the electrostatic potential generated by \(n\) point charges are all non-degenerate then their number cannot exceed \((n-1)^2\). These bounds were later improved by V. Zolotov in 2023 \cite{zolotov2023} and further improved by H. Edelsbrunner, C. Fillmore, and G. Oliveira in 2026 \cite{edelsbruner2026counting}. 
	Maxwell's bound is trivially achieved for \(n=2\) but it is not known even for \(n=3\) if 4 is the maximum number, except in the case of equal charges \cite{tsai2015maxwell}. 
	Further related problems in classical electrostatics are discussed in \cite{abanov2021around}. 
	
	In this note we produce a counterexample to Maxwell's conjecture: a configuration of five point charges with at least 24 critical points all of which are non-degenerate. The counterexample starts with three unit charges placed at the vertices of an equilateral triangle. This configuration has four equilibria: one at the centre and three displaced inwards around the edges of the triangle, as shown in Figure~1(a). Two small charges are then added to the centre of the triangle and moved slightly apart along the orthogonal axis to form a shallow triangular bipyramid. The three edge equilibria persist under the addition of the small axial charges whereas the central equilibrium bifurcates into a family of 21 equilibria.

	\subsubsection*{Acknowledgments} Kvalheim was supported in part by the Air Force Office of Scientific Research under award number FA9550-24-1-0299.
	
	\subsubsection*{Tool and computational resource disclosure} The idea behind this construction was suggested by an LLM (OpenAI's GPT-5.6 Sol). The authors have verified the mathematical details and have written the argument in their own words. Computer algebra software (Mathematica, Maple) was used to verify computations and produce visualisations.
	
	\section{Counterexample}
	
	Place three unit charges at the vertices of an equilateral triangle
	\[
	\mathbf{a}_1=\begin{pmatrix}
		1\\ 0\\0
	\end{pmatrix},
	\quad
	\mathbf{a}_2=\begin{pmatrix}
		-1/2\\ \sqrt{3}/2 \\0
	\end{pmatrix},
	\quad
	\mathbf{a}_3=\begin{pmatrix}
		-1/2 \\-\sqrt{3}/2 \\0
	\end{pmatrix}.
	\]
	The Coulomb potential generated by these charges has the following Taylor expansion around the origin
	\[
	V_\triangle=\sum_{j=1}^3\frac{1}{\|\mathbf{x}-\mathbf{a}_j\|}=3+\frac{3}{4}H_2+\frac{15}{8}H_3+\frac{9}{64}H_4+O(\|\mathbf{x}\|^5)
	\]
	where 
	\begin{align*}
		H_2&=r^2-2z^2,\\
		H_3&=x^3-3xy^2,\\
		H_4&=3r^4-24r^2z^2+8z^4,
	\end{align*}
	are harmonic polynomials with \(r^2=x^2+y^2\). Place two additional \(q_\varepsilon\)-charges along the triangle's axis of symmetry at
	\[
	\mathbf{a}_4=\begin{pmatrix}
		0\\0\\\varepsilon
	\end{pmatrix},\quad
	\mathbf{a}_5=\begin{pmatrix}
		0\\0\\-\varepsilon
	\end{pmatrix}.
	\]
	The potential generated by the axial charges has the expansion
	\[
	V_\pm^\varepsilon=\sum_{j=4}^5\frac{q_\varepsilon}{\|\mathbf{x}-\mathbf{a}_j\|}=\frac{2q_\varepsilon}{\varepsilon}-\frac{q_\varepsilon}{\varepsilon^3}H_2+\frac{q_\varepsilon}{4\varepsilon^5}H_4+O\left(\frac{q_\varepsilon\|\mathbf{x}\|^6}{\varepsilon^7}\right).
	\]

	\begin{thm}\label{thm:maintheorem}
		There is an \(\varepsilon_0>0\) such that for all \(0<\varepsilon<\varepsilon_0\) the potential \(V_\varepsilon=V_\triangle+V^\varepsilon_\pm\) generated by five charges located at \(\mathbf{a}_1,\dots,\mathbf{a}_5\) with respective strengths \(1,1,1,q_\varepsilon,q_\varepsilon\) and
		\begin{equation}\label{qe}
			q_\varepsilon=\frac{3}{4} \varepsilon^3 - \frac{5}{32} \varepsilon^5
		\end{equation}
		admits at least 24 non-degenerate critical points. Moreover, a small perturbation to the charge strengths gives a potential with a finite number \(k\ge 24\) of critical points all of which are non-degenerate.
	\end{thm}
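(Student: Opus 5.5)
The plan is to blow up a neighbourhood of the origin at the scale where the quadratic, cubic and quartic terms of \(V_\varepsilon\) are comparable, reduce the count there to a fixed polynomial model, and handle the three edge equilibria separately.

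\textbf{Expansion at the origin.} Adding the two expansions and using \eqref{qe}, the coefficient of \(H_2\) is \(\tfrac34-q_\varepsilon/\varepsilon^3=\tfrac{5}{32}\varepsilon^2\). The choice of \(q_\varepsilon\) is exactly what makes the leading quadratic part cancel. The coefficient of \(H_4\) is \(\tfrac{9}{64}+\tfrac{q_\varepsilon}{4\varepsilon^5}=\tfrac{3}{16}\varepsilon^{-2}+O(1)\). So near the origin
\[
V_\varepsilon=\mathrm{const}+\tfrac{5}{32}\varepsilon^2H_2+\tfrac{15}{8}H_3+\tfrac{3}{16}\varepsilon^{-2}H_4+O(\|\mathbf{x}\|^4)+O(\|\mathbf{x}\|^5)+O(\varepsilon^{-4}\|\mathbf{x}\|^6).
\]

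\textbf{Rescaled model.} All three displayed terms balance when \(\|\mathbf{x}\|\sim\varepsilon^2\). Writing \(\mathbf{x}=\varepsilon^2\mathbf{X}\) with \(\mathbf{X}\) in a fixed ball gives
\[
V_\varepsilon(\varepsilon^2\mathbf{X})=\mathrm{const}+\varepsilon^6F(\mathbf{X})+O(\varepsilon^8),\qquad
F=\tfrac{5}{32}H_2+\tfrac{15}{8}H_3+\tfrac{3}{16}H_4 .
\]
The error is \(O(\varepsilon^8)\) in \(C^2\) on compact sets, because all remainder terms are analytic and the axial sextic term is \(O(\varepsilon^{-4}\varepsilon^{12})\). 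Hence every non-degenerate critical point of \(F\) persists, by the implicit function theorem, as a non-degenerate critical point of \(V_\varepsilon\) at distance \(O(\varepsilon^2)\) from the origin.

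\textbf{Counting critical points of \(F\).} This is the heart of the proof and the step I expect to be the main obstacle. The goal is to show \(F\) has exactly 21 critical points, all non-degenerate. I would exploit the \(D_{3h}\) symmetry of \(F\). Critical points of a symmetric function restricted to a fixed-point set (the \(z\)-axis, the plane \(z=0\), the three vertical mirror planes) are critical points of \(F\). This reduces the search to one- and two-variable polynomial systems.
\begin{itemize}
\item The origin is one critical point.
\item On the \(z\)-axis, \(F=-\tfrac{5}{16}z^2+\tfrac32z^4\) gives two points \(z=\pm\sqrt{5/48}\).
\item On the rays of the plane \(z=0\) through the mirror lines, a one-variable quartic in the radius gives, I expect, six points.
\item In the vertical mirror planes with \(z\neq0\), a two-variable system gives, I expect, twelve points, occurring in \(\pm z\) pairs.
\end{itemize}
This totals \(1+2+6+12=21\), though I would have to verify the exact distribution. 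For each point I would compute the Hessian and check it is non-degenerate; it splits into blocks along and across the mirror planes. To rule out critical points off all mirror planes I would count solutions of \(\nabla F=0\) with a resultant or Gröbner basis computation, using computer algebra as the paper allows. Exact algebraic verification is needed here, since numerical location alone would not establish the count or the non-degeneracy.

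\textbf{Edge equilibria.} Away from the origin and the charges, \(V^\varepsilon_\pm-2q_\varepsilon/\varepsilon\) is \(O(\varepsilon^3)\) in \(C^2\) on compact sets avoiding \(0\) and the \(\mathbf{a}_j\). So the three non-degenerate edge equilibria of \(V_\triangle\) persist, which brings the total to \(3+21=24\) for \(\varepsilon<\varepsilon_0\).

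\textbf{Generic perturbation.} Critical points are confined to a compact set avoiding the charges, because the total charge is nonzero (so the gradient does not vanish near infinity) and each single Coulomb term dominates near its charge. Parametric transversality (Sard's theorem applied to \((\mathbf{x},q_1,\dots,q_5)\mapsto\nabla V\), which is a submersion because the \(q_j\) enter linearly with independent gradient fields) shows that for almost all nearby charges every critical point is non-degenerate. Non-degenerate critical points are isolated, so compactness makes them finitely many, say \(k\). Finally, the 24 known non-degenerate points persist under small perturbation, so \(k\ge24\).
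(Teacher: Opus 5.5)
Your outline follows the paper's proof step for step. You rescale by $\mathbf{x}=\varepsilon^2\mathbf{X}$ and divide by $\varepsilon^6$ to reach the limit $F=\Phi_0$; your coefficient bookkeeping and the $O(\varepsilon^8)$ bound on the remainders are right. You continue the critical points of $F$ and the three edge equilibria by the implicit function theorem. You finish with parametric transversality in the charges plus compactness, and your nonzero-total-charge argument works as well as the paper's convex-hull one.

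The genuine gap is the step you flag yourself. The six planar and twelve off-planar critical points of $F$ are only predicted, and their non-degeneracy is never checked, yet this is the whole content of the bifurcation. It is easier than you expect, and no Gr\"obner basis is needed. In fact the theorem only needs \emph{at least} 21 non-degenerate critical points of $F$, so excluding others is not required. Only the $H_3$ term depends on $\theta$, so in cylindrical coordinates $\partial_\theta F=-\tfrac{45}{8}r^3\sin 3\theta$. Hence every critical point has $r=0$ or $c:=\cos 3\theta=\pm1$. The other two equations are $r(36r^2+90cr-144z^2+5)=0$ and $z(48z^2-72r^2-5)=0$.
\begin{itemize}
\item For $z=0$, $r>0$: the quadratic $36r^2+90cr+5=0$ has no positive root if $c=1$, and has the roots $r=(15\pm\sqrt{205})/12$ if $c=-1$.
\item For $z\neq0$: substituting $144z^2=216r^2+15$ gives $18r^2-9cr+1=0$. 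This forces $c=1$ and $r\in\{\tfrac13,\tfrac16\}$, with $z=\pm\sqrt{39}/12$, resp.\ $\pm\sqrt{21}/12$.
\end{itemize}
This confirms your count $1+2+6+12=21$.

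Non-degeneracy is then a short check.
\begin{itemize}
\item Off the axis, the Hessian is block diagonal in the frame $(e_r,e_\theta,e_z)$, with $\theta\theta$-entry $-\tfrac{135}{8}rc\neq0$.
\item At the planar points the $(r,z)$-block is $\mathrm{diag}\bigl(\tfrac58(9r-1),-\tfrac18(72r^2+5)\bigr)$, and $9r\neq1$ there.
\item At the off-planar points the $(r,z)$-block has determinant $-\tfrac{65}{32}$ at $r=\tfrac13$ and $\tfrac{35}{64}$ at $r=\tfrac16$.
\item At the origin the Hessian is $\tfrac{5}{16}\mathrm{diag}(1,1,-2)$, and at the axial points it is $\mathrm{diag}(-\tfrac58,-\tfrac58,\tfrac54)$.
\end{itemize}
So all 21 are non-degenerate.

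Two smaller corrections. First, the submersion property of $(\mathbf{q},\mathbf{x})\mapsto\nabla_{\mathbf{x}}V^{\mathbf{q}}_\varepsilon(\mathbf{x})$ does not follow from the five gradient fields being linearly independent as functions, since that holds for any distinct charge locations. What suffices is that the $\mathbf{q}$-derivative is onto at every $\mathbf{x}$, i.e.\ that $\mathbf{x}-\mathbf{a}_1,\dots,\mathbf{x}-\mathbf{a}_5$ span $\R^3$. This holds because $\mathbf{a}_1,\dots,\mathbf{a}_5$ are not coplanar, which is the paper's justification. Second, away from the origin it is $V^\varepsilon_\pm$ itself that is $O(\varepsilon^3)$ in $C^2$. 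Subtracting $2q_\varepsilon/\varepsilon=O(\varepsilon^2)$ spoils the $C^0$ bound, though harmlessly, since only derivatives enter.
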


	To prove Theorem \ref{thm:maintheorem}, we introduce a deformed potential $\Phi_{\varepsilon},$ defined as
	\[
	\Phi_\varepsilon(\mathbf{X})=\frac{V_\varepsilon(\varepsilon^2\mathbf{X})-V_\varepsilon(\mathbf{0})}{\varepsilon^6},\quad\text{for }\varepsilon>0.
	\]
	% {\color{magenta}For every compact \(K\subset\R^3\) and for \(q_\varepsilon\) in \eqref{qe} there is ...such that...where \(\Phi_0 =...\).}
	\begin{lem}\label{lem:convergence}
		Let $q_{\varepsilon} = \frac{3}{4} \varepsilon^3 - \frac{5}{32} \varepsilon^5$ as in (\ref{qe}) and let
		\[
		\Phi_0 = \frac{5}{32} H_2+\frac{15}{8}H_3+\frac{3}{16}H_4.
		\]
		Then there exists an open neighbourhood $U$ of $\{ 0 \} \times \R^3$ and a function $\Psi$ real-analytic on $U$ such that
		\[
		\Phi_\varepsilon(\mathbf X) = \Phi_0(\mathbf X)+\varepsilon^2\Psi(\varepsilon^2,\mathbf X),
		\]
		whenever $\varepsilon>0$ and $(\varepsilon^2,\mathbf X)\in U.$ Consequently, for every $k\geq0$, and every $K \subset \R^3$ compact we have $\lVert \Phi_\varepsilon-\Phi_0 \rVert_{C^k(K)} \leq C_{K,k}\varepsilon^2$ for all sufficiently small \(\varepsilon>0\).
	\end{lem}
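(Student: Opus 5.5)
The plan is to expand both pieces of $V_\varepsilon$ in homogeneous harmonic polynomials. We then rescale, check that the singular $\varepsilon^{-2}H_2$ terms cancel because of the choice of $q_\varepsilon$ in \eqref{qe}, and collect what is left into a convergent power series in $t=\varepsilon^2$.

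\emph{Triangle part.} Since $V_\triangle$ is real-analytic on the unit ball, it has a homogeneous expansion $V_\triangle(\mathbf x)=\sum_{k\ge0}h_k(\mathbf x)$, with $h_0=3$, $h_2=\tfrac34H_2$, $h_3=\tfrac{15}{8}H_3$, $h_4=\tfrac{9}{64}H_4$ and $h_1=0$. Each $h_k$ is a harmonic polynomial with $|h_k(\mathbf x)|\le C\rho^{-k}\|\mathbf x\|^k$ for any $\rho<1$ (Cauchy estimates). Hence
\[
\frac{V_\triangle(\varepsilon^2\mathbf X)-3}{\varepsilon^6}=\frac{3}{4}t^{-1}H_2+\frac{15}{8}H_3+t\sum_{k\ge4}t^{k-4}h_k(\mathbf X),\qquad t=\varepsilon^2 .
\]
The last series converges absolutely and locally uniformly on $\{(t,\mathbf X):|t|\,\|\mathbf X\|<\rho\}$. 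It is therefore real-analytic in $(t,\mathbf X)$ there, and this holds also for $t\le0$.

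\emph{Axial part.} Write $\varepsilon^2\mathbf X=\varepsilon\mathbf y$ with $\mathbf y=\varepsilon\mathbf X$. Then
\[
\sum_\pm\frac{q_\varepsilon}{\|\varepsilon^2\mathbf X\mp\varepsilon\mathbf e_3\|}=\frac{q_\varepsilon}{\varepsilon}\sum_\pm\frac{1}{\|\mathbf y\mp\mathbf e_3\|}=\frac{2q_\varepsilon}{\varepsilon}\sum_{k\ge0}Z_{2k}(\mathbf y).
\]
Here $Z_m(\mathbf y)=\|\mathbf y\|^mP_m(z/\|\mathbf y\|)$ are the zonal harmonics. The odd ones cancel between the two charges. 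From the stated expansion we have $Z_2=-\tfrac12H_2$ and $Z_4=\tfrac18H_4$, and in general $|Z_{m}(\mathbf y)|\le\|\mathbf y\|^{m}$. Using $Z_{2k}(\varepsilon\mathbf X)=t^kZ_{2k}(\mathbf X)$ and $q_\varepsilon=\varepsilon^3(\tfrac34-\tfrac5{32}t)$, the rescaled axial contribution is
\[
2\Big(\tfrac34-\tfrac5{32}t\Big)\sum_{k\ge1}t^{k-2}Z_{2k}(\mathbf X)=-\tfrac34t^{-1}H_2+\tfrac5{32}H_2+\tfrac3{16}H_4+t\,R(t,\mathbf X).
\]
The remainder $R$ is again a power series in $t$ with polynomial coefficients. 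It converges absolutely and locally uniformly on $\{|t|\,\|\mathbf X\|^2<\rho\}$, because its terms are bounded by $(|t|\,\|\mathbf X\|^2)^k$. It is the evenness in $\varepsilon$ of the axial dipole-free sum that makes this a series in $t=\varepsilon^2$ rather than in $\varepsilon$.

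\emph{Conclusion.} Adding the two displays, the $t^{-1}H_2$ terms cancel. This cancellation is exactly why the leading coefficient $\tfrac34$ appears in \eqref{qe}, and the $\tfrac5{32}\varepsilon^5$ correction produces the $\tfrac5{32}H_2$ term of $\Phi_0$. The surviving $t^0$ terms are $\tfrac5{32}H_2+\tfrac{15}8H_3+\tfrac3{16}H_4=\Phi_0$. We then set $\Psi=\sum_{k\ge4}t^{k-4}h_k+R$, which is real-analytic on
\[
U=\{(t,\mathbf X):|t|\,\|\mathbf X\|<\tfrac12,\ |t|\,\|\mathbf X\|^2<\tfrac12\}.
\]
This set is open and contains $\{0\}\times\R^3$. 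For the $C^k$ estimate, given a compact $K$ we choose $\delta>0$ with $[0,\delta]\times K\subset U$. Derivatives of $\Psi$ in $\mathbf X$ up to order $k$ are continuous, hence bounded, on this compact set. This gives $\|\Phi_\varepsilon-\Phi_0\|_{C^k(K)}\le C_{K,k}\varepsilon^2$ for $\varepsilon^2\le\delta$.

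The main obstacle is bookkeeping rather than anything conceptual. We must verify that the $\varepsilon^{-2}$ singularity cancels exactly and that $\Psi$ is genuinely analytic jointly in $(t,\mathbf X)$, not merely smooth for $\varepsilon>0$. Both are handled by writing everything as locally uniformly convergent series of polynomials in $(t,\mathbf X)$, as above.
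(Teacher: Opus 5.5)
Your computation is correct and follows the paper's proof. With $t=\varepsilon^2$, the factor $q_\varepsilon/\varepsilon=t(\tfrac34-\tfrac5{32}t)$ and the evenness of the axial pair give a series in $t$, and the $t^{-1}H_2$ terms cancel. Your $Z_2=-\tfrac12H_2$ and $Z_4=\tfrac18H_4$ yield exactly $\Phi_0$, and your $U$ is a region where both expansions are valid. The paper's $F(\varepsilon,\mathbf X)$ is your $2\sum_{k\ge1}\varepsilon^{2k}Z_{2k}(\mathbf X)$.

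The gap is in the step you yourself call the main obstacle. You infer that $\Psi$ is jointly real-analytic, and hence that its $\mathbf X$-derivatives are bounded on $[0,\delta]\times K$, because your polynomial series converge absolutely and locally uniformly on a \emph{real} domain. That inference is false in general. By Weierstrass approximation, every continuous function on a real domain is the sum of such a series. Your bounds $|h_k(\mathbf X)|\le C\rho^{-k}\|\mathbf X\|^k$ and $|Z_{2k}(\mathbf X)|\le\|\mathbf X\|^{2k}$ hold only for real $\mathbf X$, so as written neither the analyticity of $\Psi$ nor the $C^k$ estimate is established.

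The fix is to work on a complex neighbourhood. Use the complex bilinear dot product and the Hermitian norm. For a real unit vector $\mathbf a\in\{\mathbf a_1,\mathbf a_2,\mathbf a_3,\pm\mathbf e_3\}$, the function $(s,\mathbf y)\mapsto(1-2s\,\mathbf a\cdot\mathbf y+s^2\,\mathbf y\cdot\mathbf y)^{-1/2}$ is holomorphic and bounded on $\{|s|\,\|\mathbf y\|<\tfrac14\}\subset\mathbf{C}\times\mathbf{C}^3$. This is because the quantity under the root stays in the disc of radius $\tfrac9{16}$ about $1$.

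This gives two equivalent repairs. First, $(t,\mathbf X)\mapsto V_\triangle(t\mathbf X)$ is holomorphic for $|t|\,\|\mathbf X\|<\tfrac14$. Also $F(\varepsilon,\mathbf X)$ is holomorphic for $|\varepsilon|\,\|\mathbf X\|<\tfrac14$ and, being even in $\varepsilon$, is a holomorphic function of $\varepsilon^2$ for $|t|\,\|\mathbf X\|^2<\tfrac1{16}$. Second, Cauchy's estimate in $s$ gives $|Z_m(\mathbf y)|\le C\,4^m\|\mathbf y\|^m$ for complex $\mathbf y$, and likewise for the triangle harmonics. Then your series converge locally uniformly on a complex neighbourhood of a shrunken $U$, for example $\{|t|\,\|\mathbf X\|<\tfrac14,\ |t|\,\|\mathbf X\|^2<\tfrac1{16}\}$, and are holomorphic there by Weierstrass's theorem.

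Restricting to real points gives the real-analyticity of $\Psi$. The $C^k$ bound then follows as in your last step. This is what the paper's proof relies on when it calls $V_\triangle(t\mathbf X)$ and $F$ real-analytic compositions with analytic Taylor remainders.
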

	
	\begin{proof}
		Let $K \subset \R^3$ be any compact set. Write $t = \varepsilon^2.$ Since $V_{\triangle}$ is real analytic near $\mathbf{0}$ and the $H_j$ are homogeneous, we have
		\[
		V_{\triangle}(t \mathbf X) - V_{\triangle}(\mathbf 0) = \frac{3}{4} t^2 H_2 + \frac{15} {8} t^3 H_3 + \frac{9}{64} t^4 H_4 + t^5 R_{\triangle} (t,\mathbf X)
		\]
		where $R_{\triangle}$ is real-analytic near $\{ 0 \} \times K.$ For the axial pair, 
		\[
		V^{\varepsilon}_{\pm}(t \mathbf X) - V^{\varepsilon}_{\pm}(\mathbf 0) = t \left(\frac{3}{4}-\frac{5}{32} t \right) F (\varepsilon,\mathbf X),
		\]
		where
		\[
		F(\varepsilon,\mathbf X)
		=
		\frac1{\sqrt{1-2\varepsilon Z+\varepsilon^2|\mathbf X|^2}}
		+
		\frac1{\sqrt{1+2\varepsilon Z+\varepsilon^2|\mathbf X|^2}}
		-2.
		\]
		This function is real analytic and even in $\varepsilon$ and its Taylor expansion is
		\[
		F(\varepsilon,\mathbf X)
		=
		-\varepsilon^2H_2(\mathbf X)
		+\frac{\varepsilon^4}{4}H_4(\mathbf X)
		+\varepsilon^6R_{\pm}(\varepsilon^2,\mathbf X).
		\]
		Substituting $\varepsilon^2=t$ and adding the $V_\triangle$ and $V^{\varepsilon}_{\pm}$ terms gives
		\[
		V_{\varepsilon}(t \mathbf X) - V_{\varepsilon}(\mathbf 0) = t^3 \left( \frac{5}{32} H_2 + \frac{15}{8} H_3 + \frac{3}{16} H_4 \right) + t^4 R (t,\mathbf X).
		\]
		for some real-analytic $R.$ Dividing by $t^3=\varepsilon^6$ proves the formula for $\Phi_{\varepsilon}$. The $C^k$-estimate follows because the derivatives of $R$ are bounded on a neighbourhood of $\{0\}\times K$.
	\end{proof}

	\begin{rmk}
		The charge \(q_\varepsilon\) is judiciously chosen to have the form in \eqref{qe} for two reasons. Firstly, in order for \(\lim_{\varepsilon\rightarrow 0}\Phi_\varepsilon\) to exist the order of \(q_\varepsilon\) must be at least \(\varepsilon^3\), the coefficient of \(\varepsilon^3\) must equal $3/4,$ and there must be no $\varepsilon^4$ term. Secondly, the specific coefficient of \(\varepsilon^5\) is chosen out of convenience to simplify the form of \(\Phi_0\); it may be replaced with any negative coefficient greater than \(-45/256\) without affecting our results.
	\end{rmk}
	
	\begin{lem}\label{lem:Phi0-critical-oints}
		The polynomial \(\Phi_0\) has exactly 21 critical points all of which are non-degenerate. 
	\end{lem}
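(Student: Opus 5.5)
The plan is to exploit the symmetry of \(\Phi_0\) and reduce the critical point equations to one-variable polynomial equations. In cylindrical coordinates \((r,\theta,z)\) with \(x=r\cos\theta\), \(y=r\sin\theta\), we have \(H_3=r^3\cos3\theta\), while \(H_2\) and \(H_4\) depend only on \(r\) and \(z\). Hence
\[
\Phi_0=\tfrac{5}{32}(r^2-2z^2)+\tfrac{15}{8}r^3\cos3\theta+\tfrac{3}{16}(3r^4-24r^2z^2+8z^4).
\]
Away from the axis \(r=0\), the angular equation \(\partial_\theta\Phi_0=-\tfrac{45}{8}r^3\sin3\theta=0\) forces \(\sin3\theta=0\), so \(c:=\cos3\theta=\pm1\). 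This confines every off-axis critical point to one of six half-planes. The vertical derivative factors as
\[
\partial_z\Phi_0=z\left(6z^2-9r^2-\tfrac58\right),
\]
so either \(z=0\) or \(z^2=\tfrac{5}{48}+\tfrac32 r^2\). These alternatives split the count into cases that are exhaustive by construction. This is what guarantees ``exactly'' 21.

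\emph{Case \(z=0\).} The origin is a critical point. Off the axis, the radial equation is
\[
\partial_r\Phi_0=r\left(\tfrac{5}{16}+\tfrac{45}{8}cr+\tfrac94 r^2\right)=0,
\]
that is, \(36r^2+90cr+5=0\). For \(c=+1\) this has no positive root. For \(c=-1\) it has two distinct positive roots, since the discriminant is \(8100-720>0\). This gives \(2\times3=6\) points, so 7 critical points in the plane \(z=0\).

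\emph{Case \(z^2=\tfrac5{48}+\tfrac32r^2\).} On the axis we get the two points \((0,0,\pm\sqrt{5/48})\). On the line \(r=0\) the coordinate \(\theta\) is singular, so I will check this directly in Cartesian coordinates: \(\nabla H_3\) vanishes there and \(\nabla_{x,y}\) of the radial terms vanishes at \(r=0\). Off the axis, substituting for \(z^2\) in \(\partial_r\Phi_0\) gives
\[
\partial_r\Phi_0=r\left(-\tfrac58+\tfrac{45}{8}cr-\tfrac{45}{4}r^2\right)=0,
\]
that is, \(90r^2-45cr+5=0\). The discriminant is \(225>0\). For \(c=1\) the roots are \(r=\tfrac13,\tfrac16\), and for \(c=-1\) there are no positive roots. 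With three rays and two signs of \(z\), this yields 12 points. The total is \(7+2+12=21\).

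For non-degeneracy, I will compute the Hessian at one representative of each orbit under the symmetry group generated by \(z\mapsto -z\) and the dihedral symmetry of \(H_3\) (rotation by \(2\pi/3\) and \(y\mapsto-y\)). At a point on a symmetry ray \(\theta=0\) or \(\theta=\pi/3\), the reflection across that ray shows that the \(y\)-direction (the angular direction) decouples. Its eigenvalue is \(r^{-2}\partial_\theta^2\Phi_0=-\tfrac{135}{8}c\,r\neq0\) at off-axis points. The remaining \(2\times2\) block in \((r,z)\) has to be checked at \(r\in\{\text{roots of }36r^2-90r+5\}\) with \(z=0\), and at \(r=\tfrac13,\tfrac16\) with the corresponding \(z\). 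At points with \(z=0\) this block is diagonal, with \(\partial_z^2\Phi_0=-\tfrac58-9r^2\neq0\) and \(\partial_r^2\Phi_0\) equal to \(r\) times the derivative of the quadratic at a simple root, hence nonzero. At points with \(z\neq0\), I would use the factored forms: \(\partial_z^2=12z^2\) on that branch, and the determinant reduces, via the simple-root property, to a nonzero multiple of \(r\,(180r-45)\). The origin and the two axial points have diagonal Hessians with explicitly nonzero entries: at the origin these are \(\tfrac5{16},\tfrac5{16},-\tfrac58\), and at the axial points they are \(-\tfrac58-18z^2+\ldots\) type entries plus \(\partial_z^2=12z^2\).

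The main obstacle I expect is the Hessian at the 12 off-plane points, where the \((r,z)\) block is not diagonal. I would handle it by noting that this block equals the Jacobian of \((\partial_r\Phi_0,\partial_z\Phi_0)\). Since \(\partial_z\Phi_0=z\,g(r,z)\) with \(g=0\) at these points, row-reducing along the curve \(g=0\) shows that the determinant is \(z\,\partial_zg\) times the derivative of the restricted radial function, that is, \(12z^2\) times a simple-root derivative, which is nonzero. As a fallback, the few exact values can be verified with computer algebra.
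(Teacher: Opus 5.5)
Your proof is correct and follows essentially the same route as the paper: cylindrical coordinates, the factored partials \(\partial_\theta\Phi_0\propto r^3\sin3\theta\) and \(\partial_z\Phi_0=z(6z^2-9r^2-\tfrac58)\), the case split giving \(1+2+6+12=21\) points, and a Hessian check, which you carry out in more detail than the paper's ``routine calculation''. One small slip: at the axial points the horizontal Hessian entries are \(\partial_x^2\Phi_0=\partial_y^2\Phi_0=\tfrac5{16}-9z^2=-\tfrac58\), with \(\partial_z^2\Phi_0=\tfrac54\), not ``\(-\tfrac58-18z^2+\dots\)''; they are nonzero either way, so non-degeneracy still holds.
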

	\begin{proof}
		In cylindrical coordinates 
		\[\Phi_0=\frac{1}{32}\left(18 r^4+60 r^3 \cos3\theta -144 r^2 z^2+5 r^2+48 z^4-10 z^2\right)
		\]
		with partial derivatives
		\begin{align*}
			\frac{\partial\Phi_0}{\partial r}&=\frac{r}{16}\left(36 r^2+90 r \cos3 \theta-144 z^2+5\right),\\[.5em]
			\frac{\partial\Phi_0}{\partial \theta}&=-\frac{45r^3}{8}\sin 3\theta,\\[.5em]
			\frac{\partial\Phi_0}{\partial z}&=\frac{z}{8}(48z^2-72r^2-5).
		\end{align*}
		The critical points can now be found and their Hessians evaluated with a routine calculation. These points are classified as follows:
		\begin{itemize}
			\item[\ding{226}] The origin \(r=z=0\) with signature \((++-)\).
			\item[\ding{226}] Two axial equilibria at \(r=0\) and \(z=\pm\frac{\sqrt{15}}{12}\) with signature \((+--)\).
			\item[\ding{226}] Six planar equilibria at \(z=0\), \(\theta=\frac{\pi}{3},\pi,\frac{5\pi}{3}\), and \(r=\frac{15\pm\sqrt{205}}{12}\) with signature \((++-)\) for the outer radius and \((+--)\) for the inner.
			\item[\ding{226}] Twelve off-planar equilibria at \((r,z)=\left(\frac{1}{3},\pm\frac{\sqrt{39}}{12}\right)\) with signature \({(+--)}\) and \((r,z)=\left(\frac{1}{6},\pm\frac{\sqrt{21}}{12}\right)\) with signature \((++-)\) for \(\theta=0,\frac{2\pi}{3},\frac{4\pi}{3}\).
		\end{itemize}
	\end{proof}

	\begin{rmk}
		For the original triangular configuration the central equilibrium has Morse index 1 and the three edge equilibria each have Morse index 2. These satisfy the Euler characteristic equation
		\[
		m_0-m_1+m_2-m_3=0
		\]
		for \(m_1=1\), \(m_2=3\), and where \(m_3=3\) represents the charge locations and \(m_0=1\) the point at infinity. When the two small axial charges are added the three edge equilibria persist but the central equilibrium bifurcates into the critical points of \(\Phi_0\), 10 of which have Morse index 1, and 11 have Morse index 2. The Morse inequalities therefore remain consistent with \(m_0=1\), \(m_1=10\), \(m_2=14\), and \(m_3=5\).
	\end{rmk}

	\begin{proof}[Proof of Theorem~1]
		By Lemma~\ref{lem:convergence}, the function $(\varepsilon,\mathbf{X}) \mapsto \Phi_\varepsilon(\mathbf{X})$
		is smooth, so the implicit function theorem implies that the 21 non-degenerate critical points of $\Phi_0$ provided by Lemma~\ref{lem:Phi0-critical-oints} persist to 21 non-degenerate critical points of $\Phi_\varepsilon$ for all sufficiently small $\varepsilon > 0$.
		These yield 21 non-degenerate critical points of $V_\varepsilon$  at distance $O(\varepsilon^2)$ from the origin.
		
		On the other hand, the implicit function theorem also implies that the three non-zero non-degenerate critical points of $V_\triangle$ persist to three non-degenerate critical points of $V_\varepsilon$ for all sufficiently small $\varepsilon >0$. The distance of these from the origin is bounded away from $0$, so they are distinct from the preceding 21 and hence $V_\varepsilon$ has at least 24 non-degenerate critical points for all sufficiently small $\varepsilon >0$.
		
		Fix such an $\varepsilon > 0$.
		We have not eliminated the possibility that $V_\varepsilon$ has degenerate critical points elsewhere.
		To get rid of them, we perturb the vector $\mathbf{q}_0 = (1,1,1,q_\varepsilon,q_\varepsilon)$ of charges of the particles.\footnote{Essentially the same argument shows that we need only perturb the charges of any 4 of these particles. Alternatively, a different application of parametric transversality shows that we need only perturb the location of a single particle \cite[Thm.~6.2]{morse1969critical}.}
		Let $V^{\mathbf{q}}_\varepsilon$ be the electrostatic potential generated by particles at the same locations $\mathbf{a}_1,\ldots, \mathbf{a}_5$ as those generating $V_\varepsilon = V^{\mathbf{q}_0}_\varepsilon$, but with charge vector $\mathbf{q}$.
		Consider the smooth map
		\begin{equation*}
			\mathbf{F}\colon \R^5 \times (\R^3 \setminus \{\mathbf{a}_1,\ldots ,\mathbf{a}_5\}) \to \R^3, \qquad   \mathbf{F}(\mathbf{q},\mathbf{x}) =  -\nabla_{\mathbf{x}} V^{\mathbf{q}}_\varepsilon(\mathbf{x})  
		\end{equation*}
		given by the resulting electric field.
		Since $\mathbf{a}_1,\ldots, \mathbf{a}_5$ do not lie in a common plane, it is straightforward to show that $\mathbf{F}$ is a submersion (cf. \cite[Ex.~1.7.22]{guillemin1974differential}), so the parametric transversality theorem implies that $\mathbf{0}$ is a regular value for $\mathbf{F}(\mathbf{q},\cdot)$ and hence $V^{\mathbf{q}}_\varepsilon$ is a Morse function for almost every $\mathbf{q}\in \R^5$ \cite[p.~68]{guillemin1974differential}. 
		
		By the implicit function theorem, choosing any such $\mathbf{q}$ sufficiently close to $\mathbf{q}_0$ yields a Morse potential $V^{\mathbf{q}}_\varepsilon$ having at least 24 critical points.
		And since all of the charges are positive, all critical points of $V^{\mathbf{q}}_\varepsilon$ belong to the compact convex hull of $\{\mathbf{a}_1,\ldots,\mathbf{a}_5\}$, so there can be only finitely many.
	\end{proof}
	
	\begin{rmk}
		The construction employed in this counterexample can be iterated in a manner similar to constructions in \cite{edelsbruner2026counting}. Suppose that, after finitely many steps, the resulting potential $U(\mathbf x)$ enjoys the same \(\mathbf{D}_3\times\mathbf{Z}_2\)-symmetry as the original triangle configuration. The symmetry forces the Taylor expansion of $U(\mathbf x)$ around the origin to be of the form
		\begin{equation}\label{eq:Upotential}
			U(\mathbf x) = U(\mathbf 0) + \alpha H_2(\mathbf x) + \beta H_3(\mathbf x) + O(\lVert \mathbf x \rVert^4),
		\end{equation}
		where, for induction, we suppose \(\alpha\) and \(\beta\) are both greater than zero (as is the case for the original potential \(V_\triangle\) of the triangular configuration). Place a pair of charges of strength $q_\eta = \alpha \eta^3 - \gamma \eta^5$ at a distance $\pm \eta$ along the $z$-axis. The contribution of this pair near the origin is
		\begin{equation*}
			-\frac{q_\eta}{\eta^3}H_2 + \frac{q_\eta}{4\eta^5}H_4 + O \left(\frac{q_\eta \lVert \mathbf x \rVert^6}{\eta^7}\right),
		\end{equation*}
		so its leading $H_2$ term cancels the term \(\alpha H_2\) in \(U\). On the scale $\mathbf x = \lambda \eta^2 \mathbf X,$ and after dividing by $\eta^6,$ a repeat of the argument above shows that the new, deformed, potential converges to
		\begin{equation*}
			\gamma \lambda^2 H_2 + \beta \lambda^3 H_3 + \tfrac{1}{4} \alpha \lambda^4 H_4
		\end{equation*}
		as $\eta \to 0.$ For an appropriate choice of $\lambda$ and $\gamma,$ this potential has the same $21$ nondegenerate critical points as found above. For example, the choice $\lambda=\frac{2\beta}{5\alpha}, \gamma=\frac{\beta^2}{30\alpha}$ makes the limit of the deformed potential a multiple of the $\Phi_0$ above. One of these critical points is the origin, which was already a critical point before the new pair of charges was added. The remaining 20 are new. Since the potential of the added pair tends to zero in $C^2$ on compact sets disjoint from the origin, the nondegenerate critical points of $U$ persist in the new potential. Thus, we have added 20 critical points at the expense of adding two small positive charges. Moreover, after this step the expansion (\ref{eq:Upotential}) of the new potential retains the same form, with coefficient of $H_2$ equal to $\gamma \eta^2 > 0$ and coefficient of $H_3$ unchanged, so the argument may be repeated. Applying this iteration to the initial potential $V_{\triangle}$ and using the perturbation argument from the proof of Theorem \ref{thm:maintheorem} gives the following
	\end{rmk}
	
	\begin{propn}
		For every $m \geq 0,$ there exists a configuration of $3 + 2m$ positive point charges whose potential has a finite number $k \geq 4+20m$ of critical points, all of which are non-degenerate.
	\end{propn}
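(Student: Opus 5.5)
The proof is an induction on $m$, following the preceding remark. The statement carried through the induction is slightly stronger than the proposition. For every $m\ge 0$ there should be a configuration of $3+2m$ positive charges, namely the triangle together with $m$ axial pairs placed symmetrically at $\pm\eta_1,\dots,\pm\eta_m$ on the $z$-axis, with three properties. Its potential $U_m$ is $\mathbf{D}_3\times\mathbf{Z}_2$-symmetric. It has at least $4+20m$ non-degenerate critical points, one of which is the origin. Its Taylor expansion at the origin has the form \eqref{eq:Upotential} with $\alpha_m,\beta_m>0$. The base case $m=0$ is $V_\triangle$: the expansion has $\alpha=3/4$ and $\beta=15/8$, and its four non-degenerate critical points are the origin and the three edge equilibria.

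For the inductive step I add the pair $q_\eta=\alpha\eta^3-\gamma\eta^5$ at $\pm\eta$ on the $z$-axis, with $\gamma=\beta^2/(30\alpha)$. I then rescale as $\Phi^{(\eta)}(\mathbf X)=\eta^{-6}\bigl(U_{m+1}(\lambda\eta^2\mathbf X)-U_{m+1}(\mathbf 0)\bigr)$ with $\lambda=2\beta/(5\alpha)$. The proof of Lemma~\ref{lem:convergence} carries over almost verbatim. $U_m$ is real-analytic near $0$ with expansion \eqref{eq:Upotential}, so its contribution is $\lambda^2\eta^4\alpha H_2+\lambda^3\eta^6\beta H_3+O(\eta^8)$. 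The pair contributes a factor $\lambda\eta^2(\alpha-\gamma\eta^2)$ times a function $F$ that is analytic and even in $\eta$. Together these give $C^k$ convergence on compact sets to $\gamma\lambda^2H_2+\beta\lambda^3H_3+\tfrac14\alpha\lambda^4H_4$. I check directly that with this $\lambda,\gamma$ the limit is a positive multiple of $\Phi_0$. The ratios of the coefficients of $H_2,H_3,H_4$ must match $5/32:15/8:3/16$, and this reduces to two scalar identities that I verify. Lemma~\ref{lem:Phi0-critical-oints} and the implicit function theorem, applied as in the proof of Theorem~1, then give 21 non-degenerate critical points within $O(\eta^2)$ of the origin, for all small $\eta$.

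Next I control the old critical points. On any compact set disjoint from the axis points $\pm\eta$ the pair potential is $O(q_\eta)=O(\eta^3)$ in $C^2$. The old critical points are away from $0$, but a subtlety arises because the earlier axial pairs lie close to the origin. Even so, every old critical point other than the origin sits at a fixed positive distance from $0$, whereas the new pair shrinks to $0$. Hence on a small fixed ball around each old non-zero critical point the perturbation tends to zero in $C^2$, and the implicit function theorem shows each persists as a non-degenerate critical point. These persisting points are at distance bounded below, while the 21 new ones are at distance $O(\eta^2)$, so the two families are disjoint. The old origin is counted among the 21, giving $(4+20m-1)+21=4+20(m+1)$ critical points.

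To keep the induction going I note that the new configuration is still $\mathbf{D}_3\times\mathbf{Z}_2$-symmetric, and I compute its new expansion. The $H_2$ coefficient is $\alpha-q_\eta/\eta^3=\gamma\eta^2>0$. The pair has no cubic term, because its potential is even in $z$ and axially symmetric, so the cubic coefficient $\beta$ is unchanged. Finally, exactly as in the proof of Theorem~1, the charge map $\mathbf F$ is a submersion because the points are not coplanar ($m\ge1$; for $m=0$ we already have the triangle's Morse potential). So a generic small perturbation of the charges gives a Morse function, which keeps the $\ge 4+20m$ non-degenerate points and, since all charges are positive, has only finitely many critical points, all lying in the convex hull.

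The main obstacle is the uniformity in the rescaling step. The error term of $U_m$ must be $O(\|\mathbf x\|^4)$ on a fixed neighbourhood of $0$ that contains no charges. This holds because the earlier pairs sit at fixed distance $\eta_1,\dots,\eta_m>0$, while $\eta\to0$ with $\eta^2\ll\eta_m$. I would make this explicit by choosing $\eta$ after all earlier parameters are fixed.
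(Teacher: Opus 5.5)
Your proposal is correct and follows essentially the same route as the paper's remark preceding the proposition. You induct on a $\mathbf{D}_3\times\mathbf{Z}_2$-symmetric potential with expansion coefficients $\alpha,\beta>0$, add the pair $q_\eta=\alpha\eta^3-\gamma\eta^5$ with $\lambda=2\beta/(5\alpha)$ and $\gamma=\beta^2/(30\alpha)$ so that the rescaled limit is a positive multiple of $\Phi_0$ (the coefficients are indeed in ratio $5:60:6$), keep the old non-origin critical points because the shrinking pair is $C^2$-small away from $0$, and finish with the charge-perturbation argument, which is how the paper argues (only a cosmetic slip: the pair's rescaled contribution is $\eta^2(\alpha-\gamma\eta^2)F(\lambda\eta,\mathbf X)$ with the paper's $F$, so your extra factor $\lambda$ should be absorbed into $F$).
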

	
	Therefore, this construction produces an asymptotic critical point to charge ratio of $10$, which is larger than the value of $25/7$ achieved in \cite{edelsbruner2026counting}.

	\begin{figure}
		\hspace*{-0.05\textwidth}
		\centering
		\begin{tikzpicture}
			\draw (-4, 4) node[inner sep=0] {\includegraphics[scale=1]{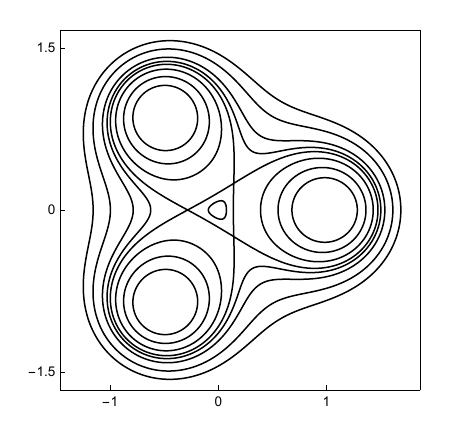}};
			\draw (-4,0) node {(a)};
			
			\draw (3.8,4) node[inner sep=0] {\includegraphics[scale=1]{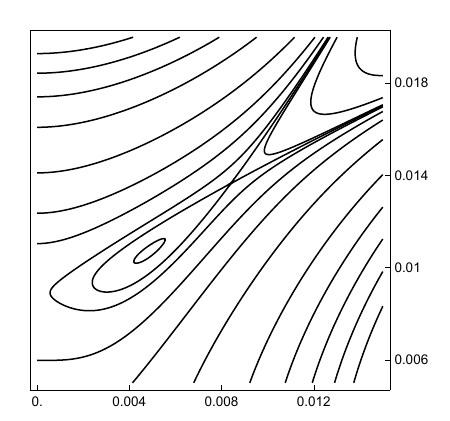}};
			\draw (3.5,0) node {(b)};

			\draw (-3.75,-3.5) node[inner sep=0]
			{\includegraphics[scale=1]{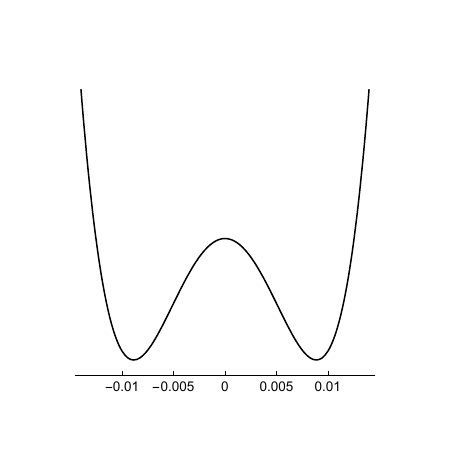}};
			\draw (-4,-7) node {(c)};

			\draw (1.75,-3.5) node[inner sep=0] {\includegraphics[scale=1]{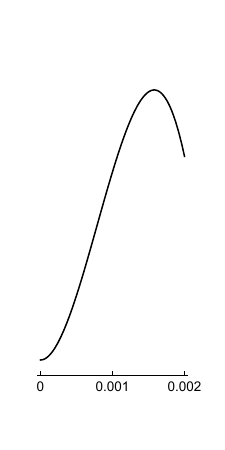}};
			%\draw (2.8,-5) node {$\lambda>I_3,~x_3<0$};
			
			\draw (5.25,-3.5) node[inner sep=0] {\includegraphics[scale=1]{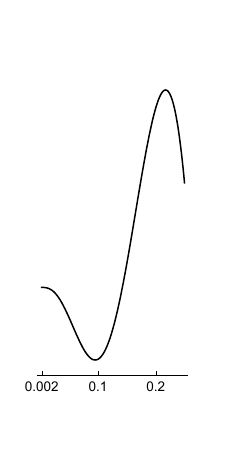}};
			\draw[dashed] (3.5,-1.5) -- (3.5,-6);
			
			\draw (3.5,-7) node {(d)};
			
		\end{tikzpicture}
		\caption{(a) The electric field generated by an equilateral triangle with no axial charges. The next three figures give numerical evidence for 24 equilibria when two axial charges are added with $\varepsilon=1/6$: (b) two off-planar equilibria in the positive $(x,z)$-quadrant of the  $y=0$ plane; (c) the potential along the $z$-axis showing 3 axial equilibria; (d) the potential along the ray  $z=0, \theta=\pi$ (with split axes) showing 3 planar equilibria.}
	\end{figure}

	\bibliographystyle{siam}
	\bibliography{crit_refs}

\end{document}